\def\beq{\begin{equation}}
\def\eeq{\end{equation}}
\begin{document}
\newtheorem{corollary}{Corollary}
\newtheorem{definition}[corollary]{Definition}
\newtheorem{example}[corollary]{Example}
\newtheorem{lemma}[corollary]{Lemma}
\newtheorem{proposition}[corollary]{Proposition}
\newtheorem{theorem}[corollary]{Theorem}
\newtheorem{fact}[corollary]{Fact}
\newtheorem{property}[corollary]{Property}
\newtheorem{observation}[corollary]{Observation}

\newcommand{\multicols}{
\newlength{\halfpagewidth}
\setlength{\halfpagewidth}{\linewidth}
\divide\halfpagewidth by 2
\newcommand{\leftsep}{%
\noindent\raisebox{4mm}[0ex][0ex]{%
\makebox[\halfpagewidth]{\hrulefill}\hbox{\vrule height 3pt}}%
\vspace*{-2mm}%
}
\newcommand{\rightsep}{%
\noindent\hspace*{\halfpagewidth}%
\rlap{\raisebox{-3pt}[0ex][0ex]{\hbox{\vrule height 3pt}}}%
\makebox[\halfpagewidth]{\hrulefill}%
}}

\newenvironment{remark}{\noindent \textbf{{Remark~}}}{\qed}
\newcommand{\remarkTitle}[1]{\textbf{(#1)}}
\newcommand{\proofComment}[1]{\exampleTitle{#1}}
\newcommand{\bra}[1]{\langle #1|}
\newcommand{\ket}[1]{|#1\rangle}
\newcommand{\braket}[3]{\langle #1|#2|#3\rangle}
\newcommand{\ip}[2]{\langle #1|#2\rangle}
\newcommand{\op}[2]{|#1\rangle\!\langle #2|}
\newcommand{\tr}{{\operatorname{Tr}\,}}
\newcommand{\supp}{{\operatorname{supp}\,}}
\newcommand{\Sch}{{\operatorname{Sch}}}
\newcommand{\GHZ}{{\textrm{GHZ}}}
\newcommand{\slocc}{\stackrel{\textrm{\scriptsize SLOCC}}{\longrightarrow}}
\newcommand{\locc}{\stackrel{\textrm{\scriptsize LOCC}}{\longrightarrow}}
\newcommand{\rk}{{\operatorname{rk}}}
\newcommand{\sr}{{\operatorname{srk}}}
\newcommand{\pr}{{\operatorname{pr}}}
\newcommand{\E}{{\mathcal{E}}}
\newcommand{\F}{{\mathcal{F}}}
\newcommand{\h}{\mathcal{H}}
\newcommand{\diag} {{\rm diag}}
\newcommand{\nc}{\newcommand}
\nc{\ox}{\otimes}
\title{Common Resource State for Preparing Multipartite Quantum Systems via Local Operations and Classical Communication}
\author{Cheng Guo$^{1~4}$}
\author{Eric Chitambar$^3$}
\author{Runyao Duan$^{2~4}$}
\affiliation{$^1$ Institute For Advanced Study, Tsinghua University, Beijing 100084, China}
\affiliation{$^2$ Institute for Quantum Computing, Baidu Inc., Beijing 100193, China}
\affiliation{$^3$ Department of Physics and Astronomy, Southern Illinois University, Carbondale, Illinois 62901, USA}
\affiliation{$^4$ Centre for Quantum Software and Information, Faculty of Engineering and Information Technology, University of Technology, Sydney, NSW 2007, Australia}

\begin{abstract}
Given a set of multipartite entangled states, can we find a common state to prepare them by local operations and classical communication?
Such a state, if exists, will be a common resource for the given set of states.  We completely solve this problem for bipartite pure states case by explicitly constructing a unique optimal common resource state for any given set of states. In the multipartite setting, the general problem becomes quite complicated, and we focus on finding nontrivial common resources for the whole multipartite state space of given dimensions.  We show that $\ket{GHZ_3}=1/\sqrt{3}(\ket{000}+\ket{111}+\ket{222})$ is  a nontrivial common resource for three-qubit systems.
\end{abstract}
\maketitle

The problem of transforming one entangled state to another by local operations and classical communication (LOCC) is of central importance in quantum entanglement theory. This problem has been studied extensively in last two decades, and many interesting results have been reported. Notably, Nielsen pointed out that local transformations among bipartite pure states can be completely characterized by an algebraic relation of majorization between their Schmidt coefficient vectors \cite{Nie99}. The majorization characterization can be extended to a class of multipartite pure states having Schmidt decompositions \cite{XD2007}. Unfortunately, the Schmidt decomposition for a generic multipartite pure state doesn't exist, and it is still an open problem to determine whether a general multipartite pure state can be transformed into another one by LOCC.

Despite the overall complexity of multipartite entanglement transformations, we can often find entangled states that can be transformed into any other state in the \textit{same} state space by LOCC. These kind of states are called \textit{maximally entangled states}, and they exist in spaces if and only if the dimension of one subsystem is no less than the product of dimensions of all other subsystems \cite{DS2009}.  Bell states are one such example, which can be transformed into any pure state in two-qubit systems.  As a simple corollary of the dimensionality bound, there is no maximally entangled state in three-qubit systems.  In fact, it is well-known that three-qubit states can be entangled in two different ways, one class consisting of so-called W-type states and other consisting of Greenberge--Horn--Zeilinger (GHZ)-type states \cite{DVC00}.  While one copy of $\ket{GHZ}$ cannot be stochastically transformed into $\ket{W}$, interestingly this transformation becomes possible at an asymptotic rate approaching unity if multi-copy transformation is allowed \cite{YGD2014}.

In this paper, we generalize the notion of maximally entangled states with respect to LOCC transformations. The problem we study can be best described through the following scenario. Assume that Alice and Bob are going to implement a series of quantum information tasks, each one requiring a different entangled state to perform.  However, instead of sharing a multitude of different states, they wish to share only \textit{one} type of entangled state and then transform this state into a different form as needed.  Thus, the question is:  for a given set of pure entangled states, is there a certain state which can be locally transformed into all of them by LOCC?  Below, we give a complete solution to this problem for bipartite systems. In addition, we study the case when the set of target states is the entire state space for some given dimensions.  When the
dimension of one subsystem in this target space is not smaller than the product of all the other subsystem dimensions, there always exists some (perhaps higher-dimensional) state that can obtain all states in the target space \cite{DS2009}.  However, this dimensionality condition turns out not to be necessary. Interestingly, we find a non-trivial state $\ket{GHZ_3}=\frac{1}{\sqrt{3}}(\ket{000}+\ket{111}+\ket{222})$ which can be transformed into any three-qubit pure state by LOCC.

Let us now formulate our problem more precisely. Let $S=\{\ket{\psi_1},\ket{\psi_2}, \cdots\}$ be a set of (multipartite) states, possibly infinite.
A common resource state $\ket{\psi}$ to $S$ can be transformed into any state in $S$ by LOCC. We say $\ket{\psi}$ an \textit{optimal common resource} (OCR) if for any other common resource $\ket{\phi}$ we have either $\ket{\phi}$ can be transformed into $\ket{\psi}$ by LOCC, or
$\ket{\phi}$ and $\ket{\psi}$ are not comparable under LOCC.
In general, it is a hard problem to find OCR for a set of multipartite states. For bipartite pure states, majorization characterizes the LOCC transformation between two pure state \cite{Nie99}. Given a bipartite pure state $\ket{\psi}$, $\lambda^\psi$ denotes a probability vector whose entries are in descending order of the Schmidt coefficients of $\ket{\psi}.$ For instance, if $\ket{\psi}=\frac{1}{\sqrt{2}}\ket{0}\ket{0}+\frac{1}{\sqrt{6}}\ket{1}\ket{2}+\frac{1}{\sqrt{3}}\ket{2}\ket{1},$
$\lambda^\psi=(\frac{1}{2},\frac{1}{3},\frac{1}{6}).$
If $\lambda^\psi=(x_1,\cdots,x_d)$ and $\lambda^\phi=(y_1,\cdots,y_d)$ satisfy
$i)$ $\forall k, 1 \leq k \leq d, \sum \limits_{j=1}^k x_j \leq \sum \limits_{j=1}^k y_j,$
$ii)$ $\sum \limits_{j=1}^d x_j = \sum \limits_{j=1}^d y_j,$
we say that $\lambda^{\psi}$ is majorized by $\lambda^{\phi}$ and write $\lambda^{\psi} \prec \lambda^{\phi}.$
Nielsen established the following fundamental result: A bipartite pure state $\ket{\psi}$ can be transformed to another pure state $\ket{\phi}$ by LOCC if and only if $\lambda^{\psi} \prec \lambda^{\phi}$ \cite{Nie99}.

Nielsen's result together with the properties of majorization leads us to an explicit construction of the unique OCR of a set of bipartite pure states.
\begin{theorem} \label{ocrv}
Let $S=\{\ket{\phi_i},i\in I\}$ be a set of $d\otimes d$ pure states, where $I$ is an index set (finite or infinite). Assume that the Schmidt coefficient vector of $\ket{\phi_i}$ is given by $\lambda_{\phi_i}=(x_{1}^{(i)},\cdots,x_{d}^{(i)})$.

Then if $I$ is a finite set, the OCR for $S$ always exists and is unique. The OCR state $\ket{\psi}$ is given by $\lambda^{\psi}=(y_1,\cdots,y_d),$ where

$y_k= \min_{i\in I}\sum \limits_{j=1}^k x_{j}^{(i)}-\min_{i\in I}\sum \limits_{j=1}^{k-1} x_{j}^{(i)}.$

Furthermore, if $I$ is infinite, the $\min$ sign in above equations should be replaced with $\inf$.

\end{theorem}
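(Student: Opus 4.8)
The plan is to reduce everything to Nielsen's majorization criterion together with a single bookkeeping observation about partial sums. Writing $S_k^{(i)}=\sum_{j=1}^k x_j^{(i)}$ for the $k$-th partial sum of the $i$-th Schmidt vector (with $S_0^{(i)}=0$), the definition of $y_k$ telescopes immediately to
$$\sum_{j=1}^k y_j=\min_{i\in I}S_k^{(i)}=:T_k,$$
with $\min$ replaced by $\inf$ in the infinite case. I would record this identity at the outset, since it drives every subsequent step.

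The first substantive task is to verify that $\lambda^{\psi}=(y_1,\dots,y_d)$ is a genuine Schmidt vector, i.e.\ a probability vector with nonincreasing entries. Normalization is clear because $T_d=\min_i S_d^{(i)}=1$. The crucial point is monotonicity $y_1\ge\cdots\ge y_d$, which is equivalent to concavity of $k\mapsto T_k$. Here I would invoke that each $S^{(i)}$ is concave in $k$, precisely because its increments $x_k^{(i)}$ are nonincreasing (the Schmidt coefficients are sorted in descending order), and that a pointwise minimum — or infimum — of concave functions is again concave. Nonnegativity then follows: the last increment $y_d=1-T_{d-1}\ge0$ since $T_{d-1}\le S_{d-1}^{(i)}\le1$, and concavity forces every earlier increment to be at least as large. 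I expect this verification — that the construction lands inside the set of valid Schmidt vectors — to be the main obstacle, since it is the one place where the descending order of the individual Schmidt vectors is genuinely needed.

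With $\lambda^{\psi}$ established as a legitimate state, membership as a common resource is immediate: the telescoping identity gives $\sum_{j=1}^k y_j=T_k\le S_k^{(i)}$ for every $i$ and $k$, so $\lambda^{\psi}\prec\lambda^{\phi_i}$, and Nielsen's theorem yields $\ket{\psi}\locc\ket{\phi_i}$ for all $i\in I$. For optimality I would take an arbitrary common resource $\ket{\phi}$ with Schmidt vector $(z_1,\dots,z_d)$. Since $\ket{\phi}$ reaches every $\ket{\phi_i}$, Nielsen gives $\sum_{j=1}^k z_j\le S_k^{(i)}$ for all $i$; taking the infimum over $i$ yields $\sum_{j=1}^k z_j\le T_k=\sum_{j=1}^k y_j$, hence $\lambda^{\phi}\prec\lambda^{\psi}$ and $\ket{\phi}\locc\ket{\psi}$. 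Thus $\ket{\psi}$ is in fact a greatest lower bound of $S$ in the majorization order — \emph{every} common resource reaches it — which is strictly stronger than the defining OCR property and renders the ``incomparable'' case vacuous.

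Uniqueness then follows formally. Any other OCR $\ket{\psi'}$ is in particular a common resource, so the previous paragraph gives $\ket{\psi'}\locc\ket{\psi}$; but $\ket{\psi}$ is itself a common resource, and the OCR property of $\ket{\psi'}$ together with this comparability forces $\ket{\psi}\locc\ket{\psi'}$. Mutual LOCC convertibility of bipartite pure states means equal Schmidt vectors, so $\lambda^{\psi'}=\lambda^{\psi}$. The infinite case requires no new idea: the infimum replaces the minimum throughout, and the concavity and nonnegativity arguments carry over verbatim, with $T_{d-1}\le1$ still guaranteeing $y_d\ge0$.
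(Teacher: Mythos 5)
Your proposal is correct and follows essentially the same route as the paper: both rest on the telescoping identity $\sum_{j=1}^k y_j=\min_i\sum_{j=1}^k x_j^{(i)}$, verify that the resulting increments are nonincreasing and nonnegative, and then apply Nielsen's criterion in both directions to get that $\ket{\psi}$ is a common resource majorized-above by every other common resource, whence uniqueness. The only notable difference is cosmetic but welcome: your observation that the pointwise minimum (or infimum) of concave partial-sum sequences is concave gives the monotonicity $y_k\ge y_{k+1}$ uniformly for finite and infinite $I$, replacing the paper's more laborious monotone-subsequence extraction in the infinite case.
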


The proof is given in the Appendix.

Let us consider an example to demonstrate the application of the above theorem. Let a $d-$dimensional bipartite target set be $S_a=\{ \ket{\phi} | \lambda^{\phi}_1 \geq a \}$ , where $a \geq 1/d$. Then an OCR $\ket{\psi}$ for $S_a$ can be chosen as $\ket{\psi}$ such that $\lambda^{\psi}=(a,\frac{1-a}{d-1},\frac{1-a}{d-1} , \cdots ,\frac{1-a}{d-1})$. The maximal entangled state $\frac{1}{\sqrt{d}}\sum \limits_{k=0}^{d-1}\ket{kk}$ is always a common resource but usually not optimal.

We shall now move to multipartite setting, and consider the problem ``what is the common resource of the whole system''. For $n-$partite quantum system $d_1\otimes \cdots \otimes d_n$ where $d_1\geq d_2\geq \cdots\geq d_n$, the maximal entangled state exists, in the sense that all other states in the system can be obtained from the state by LOCC, if and only if $d_1\geq \prod \limits_{i=2}^n d_i$ \cite{DS2009}. For instance, the state $\frac{1}{2}(\ket{000}+\ket{101}+\ket{210}+\ket{311})$ is an OCR of tripartite $\h_4 \ox \h_2 \ox \h_2$ system. Interestingly, the OCR exists even if any sub-system's dimension is less than the product of other sub-system's dimensions.

What is the connection between $\ket{\psi}$ being an OCR for $S$ and $S$ being a possible multi-outcome image for some LOCC map performed on $\ket{\psi}$?  The latter question was first posed by Jonathan and Plenio \cite{JP2014}, and is more precisely stated as follows: If $p(i)$ is an arbitrary distribution over the $\ket{\phi_i}\in S$, when does there exist an LOCC transformation that transforms $\ket{\psi}$ into $\ket{\phi_i}$ with probability $p(i)$ for each $\ket{\psi_i}$?  If $\ket{\psi}$ is an OCR for $S$, then clearly $\ket{\psi}$ can generate the ensemble $\{\ket{\psi_i},p(i)\}_{i\in I}$;  Alice and Bob simply sample from $I$ with distribution $p(i)$ and then deterministically prepare $\ket{\psi_i}$ given $i$.  We can also see the possibility of generating $\{\ket{\psi_i},p(i)\}_{i\in I}$ directly from Theorem 1.  For any $l\in\{1,\cdots, d\}$ we have

$\sum_{k=1}^l y_k
\\ =\sum_{k=1}^l\left(\min\limits_{i\in I}\sum \limits_{j=1}^k x_{j}^{(i)}-\min\limits_{i\in I}\sum \limits_{j=1}^{k-1} x_{j}^{(i)}\right)
\\ =\min\limits_{i\in I}\sum \limits_{j=1}^l x_{j}^{(i)}\leq \sum_{i\in I}p(i)\sum \limits_{j=1}^l x_{j}^{(i)}.\notag $

Satisfying this inequality is both necessary and sufficient for transforming $\ket{\psi}$ into the ensemble $\{\ket{\psi_i},p(i)\}_{i\in I}$ \cite{JP2014}.

\begin{theorem}
$\ket{GHZ_3}=\frac{1}{\sqrt{3}}(\ket{000}+\ket{111}+\ket{222})$ is a common resource of three-qubit system.
\end{theorem}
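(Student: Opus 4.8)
The plan is to produce every three-qubit pure state $\ket\phi$ from $\ket{GHZ_3}$ by an explicit (generally adaptive) LOCC protocol, organized by the SLOCC classification of three-qubit states \cite{DVC00}. Since local unitaries are reversible LOCC and cost nothing, it suffices to reach a canonical representative of each local-unitary orbit; I would therefore fix the standard canonical form $\ket\phi=\ket0\ket{\mu_0}+\ket1\ket{\mu_1}$, in which Alice holds a qubit and $\ket{\mu_0},\ket{\mu_1}$ are (subnormalized) two-qubit states on $BC$. The three GHZ-branches $\ket{00}_{BC},\ket{11}_{BC},\ket{22}_{BC}$, correlated with Alice's levels $\ket0,\ket1,\ket2$, are the handles the protocol manipulates.

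For the separable and biseparable targets the construction is clean and deterministic. The key observation is that if Alice measures her qutrit in the Fourier basis $\ket{\hat k}=\frac1{\sqrt3}\sum_j \omega^{jk}\ket j$ with $\omega=e^{2\pi i/3}$, then every outcome collapses $BC$ into $\frac1{\sqrt3}\sum_j\omega^{-jk}\ket{jj}$, which is the maximally entangled two-qutrit state $\Phi_3$ on $BC$ up to local phases. Since this measurement is complete, Alice \emph{deterministically} hands $B$ and $C$ a shared $\Phi_3$ while disentangling herself; by Nielsen's majorization criterion \cite{Nie99} the uniform vector of $\Phi_3$ majorizes every two-qubit Schmidt vector, so $B$ and $C$ can then deterministically rotate it into any $\ket{\mu}_{BC}$, and Alice locally prepares her product qubit. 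Up to relabeling the parties, this settles all targets that are unentangled across some bipartite cut.

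The substantial case is the genuinely tripartite-entangled targets (the W- and GHZ-classes). Here Alice must retain entanglement with $BC$ \emph{and} endow $B,C$ with the entanglement demanded by $\ket{\mu_1}$, even though the marginal $\rho_{BC}=\frac13\sum_i\op{ii}{ii}$ of $\ket{GHZ_3}$ is separable. The starting point is the SLOCC statement: writing $\ket{\mu_1}=s_1\ket{e_1}\ket{f_1}+s_2\ket{e_2}\ket{f_2}$ in its $B|C$ Schmidt decomposition exhibits $\ket\phi$ as a sum of at most three product terms, matching the tensor rank of $\ket{GHZ_3}$. A product filter $F_A\ox F_B\ox F_C$ mapping branch $0\mapsto\ket0\ket0\ket0$, branch $1\mapsto\ket1\ket{e_1}\ket{f_1}$, branch $2\mapsto\ket1\ket{e_2}\ket{f_2}$ with suitable singular values reproduces $\ket\phi$ on a single outcome; upgrading this to a deterministic protocol is the crux.

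The main obstacle is exactly this determinism. A single round of independent product measurements cannot work: because the target fixes non-uniform weights $(\lambda_0,s_1,s_2)$ on the three branches, the success filter satisfies $F_A^\dagger F_A\not\propto\mathbb{1}$, so completing it to a full POVM forces outcomes with a different singular-value profile, hence states not local-unitarily equivalent to $\ket\phi$. The protocol must therefore be adaptive, and the $B$--$C$ entanglement can only be distributed by Alice's measurement. I would resolve this by having Alice perform a balanced, Fourier-type measurement whose outcomes differ only by correctable local phases, followed by conditional deterministic reshaping by $B$ and $C$ coordinated by classical communication; the verification that every branch is correctable reduces to checking majorization across each cut, where the uniform Schmidt vector $(1/3,1/3,1/3)$ of $\ket{GHZ_3}$ leaves strict slack against the rank-two vectors of any three-qubit state. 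This slack — absent for the two-qubit GHZ state, which cannot even reach $\ket W$ by SLOCC — is precisely what the extra qutrit dimension buys, and ensuring the conditional corrections are implementable by local operations on $B$ and $C$ alone is the step demanding the most care.
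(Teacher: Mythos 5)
Your treatment of the separable and biseparable targets is correct and self-contained: the Fourier measurement deterministically hands Bob and Charlie a maximally entangled two-qutrit state, Nielsen's criterion finishes that case, and party symmetry of $\ket{GHZ_3}$ covers the other cuts. You have also correctly identified the first move for genuinely entangled targets --- the paper's step $i)$ is exactly such a ``balanced'' three-outcome measurement by Alice with cyclically permuted Kraus operators, taking $\ket{GHZ_3}$ to $z_0\ket{000}+z_1\ket{111}+z_2\ket{222}$ with all outcomes correctable by local relabelings --- and you have correctly diagnosed why a single non-adaptive product filter cannot be deterministic.

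However, for the W-class and GHZ-class targets the argument stops exactly where the real work begins, and there are two concrete gaps. First, the ``conditional deterministic reshaping by $B$ and $C$'' is never constructed. In the paper this is a chain of explicit measurements: a relabeling round taking the branches $\ket{000},\ket{111},\ket{222}$ to $\ket{100},\ket{010},\ket{001}$ (W case) or to $\ket{000},\ket{101},\ket{210}$ (GHZ case), followed by a final measurement by Alice whose Kraus operators differ only in signs so that every branch is locally correctable; moreover the GHZ class splits into an orthogonal sub-case (routed through $\ket{GHZ_2}$ and the results of Turgut et al.) and a non-orthogonal sub-case needing a four-outcome filter built from the coefficients $x, y, a_i, b_i, c_i$ of the target. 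None of these protocols is recoverable from your outline. Second, the verification principle you lean on --- that correctability of every branch ``reduces to checking majorization across each cut'' --- is false as a sufficient condition: bipartite majorization across every cut is necessary but not sufficient for multipartite LOCC convertibility, and if it were sufficient the entire theorem would be a one-line corollary of the $(1/3,1/3,1/3)$ marginals of $\ket{GHZ_3}$. As written, the central claim for genuinely tripartite-entangled targets remains unproven.
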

\begin{proof}
If $\ket{\psi}$ can be transformed into $\ket{\phi}$ via LOCC, we denote this as $\ket{\psi} \locc \ket{\phi}.$

The proof details of the protocols will be given in the Appendix. Here is an outline of the proof ideas.  We divide the proof into two cases according to the target states are W-type or GHZ-type states. The case of GHZ can be further divide into two sub-cases: orthogonal GHZ and non-orthogonal GHZ.

If the target state $\ket{\phi}$ is SLOCC equivalent to a W state, $\ket{\phi}$ is local-unitary (LU) equivalent to $x_0 \ket{000}+x_1 \ket{100}+x_2 \ket{010}+x_3 \ket{001},$ where $x_k$ are all positive real numbers and $\sum \limits_{k=0}^3 x_k^2=1$ \cite{KT2010}. Then, we can do the following transforms:
$\\i)$ $\ket{GHZ_3} \locc \sqrt{x_0^2+x_1^2}\ket{000} + x_2 \ket{111}+x_3 \ket{222}$
$\\ii)$ $\sqrt{x_0^2+x_1^2}\ket{000} + x_2 \ket{111}+x_3 \ket{222} \locc \sqrt{x_0^2+x_1^2}\ket{100} + x_2 \ket{010}+x_3 \ket{001}$
$\\iii)$ $\sqrt{x_0^2+x_1^2}\ket{100} + x_2 \ket{010}+x_3 \ket{001}
\locc \ket{\phi}.$

The step $i)$ has been shown in \cite{XD2007}, and the step $iii)$ was given in \cite{KT2010}. For completeness we provided simple proofs in appendix.

If the target state $\ket{\phi}$ is SLOCC equivalent to GHZ state,
$\ket{\phi}$ is LU equivalent to $x \ket{000} + y \ket{\phi_A \phi_B \phi_C}$. When $\ip{000}{\phi_A \phi_B \phi_C}=0$, we name it orthogonal GHZ case, otherwise non-orthogonal GHZ case \cite{TGP2010}.

For the orthogonal GHZ case, $\ket{GHZ_2}=\frac{1}{\sqrt{2}}(\ket{000}+\ket{111}) \locc \ket{\phi}$ \cite{TGP2010}. Notice that, $\ket{GHZ_3} \locc \ket{GHZ_2}$ \cite{XD2007}. Hence, $\ket{GHZ_3} \locc \ket{\phi}.$

For the non-orthogonal GHZ case, suppose $\ket{\phi_A}=a_0 \ket{0} + a_1 \ket{1}$, $\ket{\phi_B}=b_0 \ket{0} + b_1 \ket{1}$ and $\ket{\phi_C}=c_0 \ket{0} + c_1 \ket{1}$, and $a_0, a_1,b_0, b_1,c_0$ and $c_1$ are all real numbers.
$\ket{\mathit{\Psi}_1}=\sqrt{|x c_0 +y a_0 b_0|^2 + |y a_1 b_0|^2}~ \ket{000} + |x c_1| \ket{111} + |y b_1| \ket{222}$ and $\ket{\mathit{\Psi}_2}=\sqrt{|x c_0 +y a_0 b_0|^2 + |y a_1 b_0|^2}~ \ket{000} + |x c_1| \ket{101} + |y b_1| \ket{210}$

$\\i)$ $\ket{GHZ_3} \locc \ket{\mathit{\Psi}_1}$
$\\ii)$ $\ket{\mathit{\Psi}_1} \locc \ket{\mathit{\Psi}_2}$
$\\iii)$ $\ket{\mathit{\Psi}_2} \locc x \ket{00 \phi_C} + y \ket{ \phi_A \phi_B 0}$
$\\iv)$ $x \ket{00 \phi_C} + y \ket{ \phi_A \phi_B 0} \locc x \ket{000} + y \ket{ \phi_A \phi_B  \phi_C}$

Again the step $i)$ is from \cite{XD2007}.
The step $iii)$ is very technical, and is given in appendix.
The step $iv)$: Charlie takes a local unitary operation $\op{\phi_C}{0}+(c_1\ket{0} - c_0 \ket{1}) \langle 1|.$
\end{proof}

We further check the $3 \ox 3 \ox 2$ system and get the following result.

\begin{theorem}
There is no common resource of three-qubit system in $\h^3 \ox \h^2 \ox \h^2 $ systems.
\end{theorem}

The proof is quite technical. The main idea is: there are two SLOCC equivalence classes in $\h^3 \ox \h^2 \ox \h^2$ system. One is SLOCC equivalent to $\ket{000}+\ket{101}+\ket{210}$ ($A$)
, and another is with rank $\ket{000}+\ket{111}+\ket{210}+\ket{201}$.
Any state with the first type in that system cannot be transformed into $\ket{GHZ_2}$ via LOCC. Also, no state with the second type in that system can be transformed into any $W-$kind of states via LOCC.We will show the details of this proof in Appendix.

In conclusion, we introduce a notion of optimal common resource for a set of entangled states, and explicitly construct it for any bipartite pure state set. We also show that $\ket{GHZ_3}$ state is a nontrivial common resource for three-qubit system, and conjecture its optimality.
We are still not sure about whether there exists a common resource of three-qubit system is $\h^3 \ox \h^3 \ox \h^2$ systems. We hope this problem will stimulate further research interest in entanglement transformation theory.

CG and RD were supported in part by the Australian Research Council (Grants No. DP120103776 and FT120100449) and the National Natural Science Foundation of China (Grant No. 61179030).
CG and EC were supported in part by the National Science Foundation (NSF) Early CAREER Award No. 1352326.

\newpage
\begin{center}
\Large Appendix
\end{center}

\section{proof of theorem \ref{ocrv}}

According to Nielsen's majorization criterion for entanglement transformation,  Theorem \ref{ocrv} is essentially due to the following lemma:
\begin{lemma}
Suppose that $X^{(k)}=(x_{1}^{(k)},x_{2}^{(k)},\cdots,x_{d}^{(k)}$ are a set of $d$--dimensional vectors where
$x_{1}^{(k)}\!\geq x_{2}^{(k)}\! \geq\! \cdots\! \geq\!x_{d}^{(k)}.$
There always exists an optimal vector $Y=(y_1,\cdots,y_d)$ such that $Y\prec X^{(k)}$ for any $k$. Furthermore,  $Y=(y_1,\cdots, y_d)$ can be chosen as
$y_k= \min(\sum \limits_{j=1}^k x_{j}^{(1)},\sum \limits_{j=1}^k x_{j}^{(2)},\cdots,\sum \limits_{j=1}^k x_{j}^{(n)})-\min(\sum \limits_{j=1}^{k-1} x_{j}^{(1)},\sum \limits_{j=1}^{k-1} x_{j}^{(2)},\cdots,\sum \limits_{j=1}^{k-1} x_{j}^{(n)}).$ Furthermore, if n is infinite, $y_k= \inf(\sum \limits_{j=1}^k x_{j}^{(h)}|h=1,2 \cdots)-\inf(\sum \limits_{j=1}^{k-1} x_{j}^{(h)}|h=1,2 \cdots).$
\end{lemma}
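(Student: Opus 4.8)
The plan is to show that the vector $Y$ produced by the construction is precisely the \emph{meet} (greatest lower bound) of the family $\{X^{(k)}\}$ in the lattice of probability vectors ordered by majorization, from which both existence and optimality follow at once. Write $S_k^{(i)}=\sum_{j=1}^{k}x_j^{(i)}$ for the partial sums (with $S_0^{(i)}=0$) and set $m_k=\inf_i S_k^{(i)}$, so that by construction $\sum_{j=1}^k y_j = m_k$ and $y_k=m_k-m_{k-1}$. Two things must be established: first, that $Y$ is a genuine Schmidt-coefficient vector, i.e.\ a nonnegative, normalized, \emph{non-increasing} sequence (this is exactly what is needed for $Y\prec X^{(i)}$ to be meaningful under Nielsen's theorem); and second, that $Y$ dominates every common lower bound.

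First I would record the structural fact that drives everything. Since each $X^{(i)}$ is arranged in non-increasing order, the increments $x_k^{(i)}$ of the sequence $k\mapsto S_k^{(i)}$ are non-increasing in $k$, so each $S^{(i)}$ is a \emph{concave} sequence, meaning $S_{k-1}^{(i)}+S_{k+1}^{(i)}\le 2S_k^{(i)}$. Taking the infimum over $i$ gives $m_{k-1}+m_{k+1}\le m_{k-1}+m_{k+1}\le 2\inf_i S_k^{(i)}=2m_k$, so $k\mapsto m_k$ is itself concave. I would then read off the three required properties of $Y$ from this single observation: concavity gives $y_k=m_k-m_{k-1}\ge m_{k+1}-m_k=y_{k+1}$, so the entries of $Y$ are in descending order; monotonicity $m_k\ge m_{k-1}$ (which holds because $S_k^{(i)}\ge S_{k-1}^{(i)}$ for every $i$, the $x_j^{(i)}$ being nonnegative) gives $y_k\ge 0$; and $m_d=\inf_i S_d^{(i)}=1$, since every $X^{(i)}$ is normalized, gives $\sum_k y_k=1$. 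This concavity step is the crux and the main obstacle: it is the only place where the descending-order hypothesis on the $X^{(i)}$ is used, and without it the naive $\inf$ of partial sums need not be a legitimately sorted vector, in which case a concave-envelope correction would be required.

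With $Y$ shown to be a valid vector, the majorization statements are immediate. For any fixed $i$ and every $k$ we have $\sum_{j=1}^k y_j = m_k=\inf_{i'} S_k^{(i')}\le S_k^{(i)}$, with equality at $k=d$ by normalization, hence $Y\prec X^{(i)}$. For optimality, suppose $Z=(z_1,\dots,z_d)$ is any probability vector with $Z\prec X^{(i)}$ for all $i$. Then $\sum_{j=1}^k z_j\le S_k^{(i)}$ for every $i$, so $\sum_{j=1}^k z_j\le \inf_i S_k^{(i)}=m_k=\sum_{j=1}^k y_j$, while $\sum_j z_j=1=\sum_j y_j$; therefore $Z\prec Y$. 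Thus every common lower bound is majorized by $Y$, which establishes that $Y$ is optimal and, being the greatest such vector, unique.

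Finally I would note that the infinite case needs no new idea: an infimum of concave functions is still concave and the normalization $m_d=1$ persists verbatim (each $S_d^{(i)}=1$), so replacing $\min$ by $\inf$ throughout leaves every step intact, with no attainment of the infimum required, since only the partial-sum inequalities and the concavity of $m_k$ enter. Translated back through Nielsen's criterion, $Y$ is the Schmidt-coefficient vector of the claimed optimal common resource state, which completes the proof of Theorem~\ref{ocrv}.
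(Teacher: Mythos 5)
Your proof is correct, and the overall skeleton (show $Y$ is a legitimately ordered vector, show $Y\prec X^{(i)}$ for all $i$, show any common lower bound $Z$ satisfies $Z\prec Y$) matches the paper's. The genuine difference is in how you establish that the entries of $Y$ are non-increasing. The paper argues directly: in the finite case it picks an index $t$ attaining $\min_i S_k^{(i)}$ and sandwiches $y_{k+1}\le x_{k+1}^{(t)}\le x_k^{(t)}\le y_k$; in the infinite case, where the infimum need not be attained, it has to extract a convergent subsequence along which the relevant partial sums and entries all converge, and redo the sandwich in the limit. You instead observe that each partial-sum sequence $k\mapsto S_k^{(i)}$ is concave (precisely because the $x_j^{(i)}$ are sorted in descending order) and that a pointwise infimum of concave sequences is again concave, since $m_{k-1}+m_{k+1}\le S_{k-1}^{(i)}+S_{k+1}^{(i)}\le 2S_k^{(i)}$ holds for every $i$ and the left side is independent of $i$. (Your displayed chain has a typo --- it repeats $m_{k-1}+m_{k+1}$ instead of passing through $S_{k-1}^{(i)}+S_{k+1}^{(i)}$ --- but the intended argument is clear and valid.) This concavity route buys a uniform treatment of the finite and infinite cases with no attainment or subsequence extraction needed, which is cleaner than the paper's handling of the infinite case; you also explicitly verify nonnegativity and normalization of $Y$, which the paper leaves implicit. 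Your Steps 2 and 3 (majorization by each $X^{(i)}$, and domination of every common lower bound via $\sum_{j\le k}z_j\le\inf_i S_k^{(i)}=m_k$) coincide with the paper's.
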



\begin{proof}
Let us first consider the case that $n$ is finite. We will complete the proof via the following steps:

Step 1: $y_k \geq y_{k+1}.$

Suppose $\min(\sum \limits_{j=1}^k x_{j}^{(1)},~\sum \limits_{j=1}^k x_{j}^{(2)},\cdots,~\sum \limits_{j=1}^k x_{j}^{(n)})$ is just $\sum \limits_{j=1}^k x_{j}^{(t)}$.

$y_{k+1}=\min(\sum \limits_{j=1}^{k+1} x_{j}^{(1)},~\sum \limits_{j=1}^{k+1} x_{j}^{(2)},\cdots,~\sum \limits_{j=1}^{k+1} x_{j}^{(n)})-\min(\sum \limits_{j=1}^{k} x_{j}^{(1)},~\sum \limits_{j=1}^{k} x_{j}^{(2)},\cdots,~\sum \limits_{j=1}^{k} x_{j}^{(n)})
\\ \leq x_{j+1}^{(t)} \leq x_{j}^{(t)}
\\ \leq \min(\sum \limits_{j=1}^k x_{j}^{(1)},~\sum \limits_{j=1}^k x_{j}^{(2)},\cdots,~\sum \limits_{j=1}^k x_{j}^{(n)})-\min(\sum \limits_{j=1}^{k-1} x_{j}^{(1)},~\sum \limits_{j=1}^{k-1} x_{j}^{(2)},\cdots,~\sum \limits_{j=1}^{k-1} x_{j}^{(n)}) =y_k.
$

Step 2: $\forall m,~Y \prec X^{(m)}.$

It is obvious to see that $\sum \limits_{j=1}^k y_{j} \leq \sum \limits_{j=1}^k x_{j}^{(m)}.$

Step 3: If $\forall k,~Z \prec X^{(k)},$ then $Z\prec Y.$

Otherwise, if $\exists k_0, \sum \limits_{j=1}^{k_0} z_{j} \textgreater \sum \limits_{j=1}^{k_0} y_{j}.$

$\sum \limits_{j=1}^{k} y_{j} = \min(\sum \limits_{j=1}^k x_{j}^{(1)},
\cdots,\sum \limits_{j=1}^k x_{j}^{(n)}),$
so $\forall k, \exists f(k), \sum \limits_{j=1}^{k} y_{j} = \sum \limits_{j=1}^k x_{j}^{(f(k))}.$

$\sum \limits_{j=1}^{k_0} z_{j} \textgreater \sum \limits_{j=1}^{k_0} y_{j} = \sum \limits_{j=1}^{k_0} x_{j}^{(f(k_0))}.$ This is a contradiction against with $Z \prec X^{(f(k_0))}.$

If $n$ is infinite, we modify our proof as follows: (notice that:)

$\forall k,m, \sum \limits_{j=1}^{k} y_{j} = \inf(\sum \limits_{j=1}^k x_{j}^{(h)}) \leq \sum \limits_{j=1}^k x_{j}^{(m)}.$

Step 1: $y_k \geq y_{k+1}.$

Suppose $\inf(\sum \limits_{j=1}^{k} x_{j}^{(h)}) = \lim \limits_{h\rightarrow \infty} \sum \limits_{j=1}^{k} x_{j}^{(g(h))},$ where $\sum \limits_{j=1}^{k}x_{j}^{(g(h))}$ is in descending order. We can also find a sub--sequence $(f(h)) \subset $ sequence$(g(h)),$ such that $\inf(\sum \limits_{j=1}^{k} x_{j}^{(h)}) = \lim \limits_{h\rightarrow \infty} \sum \limits_{j=1}^{k} x_{j}^{(f(h))}$, also $\lim \limits_{h\rightarrow \infty} x_{k}^{(f(h))}$
and $\lim \limits_{h\rightarrow \infty} x_{k+1}^{(f(h))}$ exist.
This is because any infinite bounded sequence must have a monotonic convergent sub--sequence with a limit.

$y_{k+1}=\inf(\sum \limits_{j=1}^{k+1} x_{j}^{(h)})-
\inf(\sum \limits_{j=1}^{k} x_{j}^{(h)}) = \inf(\sum \limits_{j=1}^{k+1} x_{j}^{(h)})- \lim \limits_{h\rightarrow \infty} \sum \limits_{j=1}^{k} x_{j}^{f(h)}
\\ \leq \lim \limits_{h\rightarrow \infty} \sum \limits_{j=1}^{k+1} x_{j}^{f(h)} - \lim \limits_{h\rightarrow \infty} \sum \limits_{j=1}^{k} x_{j}^{f(h)}
\\ \leq \lim \limits_{h\rightarrow \infty} x_{k+1}^{f(h)} \leq \lim \limits_{h\rightarrow \infty} x_{k}^{f(h)}
\\ \leq  \lim \limits_{h\rightarrow \infty} \sum \limits_{j=1}^{k} x_{j}^{f(h)}-\inf(\sum \limits_{j=1}^{k-1} x_{j}^{(h)})
=\inf(\sum \limits_{j=1}^{k} x_{j}^{(h)})-
\inf(\sum \limits_{j=1}^{k-1} x_{j}^{(h)}) =y_k.
$

Step 2: $\forall m,~Y \prec X^{(m)}.$

Step 3: If $\forall k,~Z \prec X^{(k)},$ then $Z\prec Y.$

Otherwise, if $\exists k_0, \sum \limits_{j=1}^{k_0} z_{j} \textgreater \sum \limits_{j=1}^{k_0} y_{j}.$

$\sum \limits_{j=1}^{k_0} z_{j} \textgreater \sum \limits_{j=1}^{k_0} y_{j} = \inf(\sum \limits_{j=1}^k x_{j}^{(h)}).$ This is a contradiction against with $\forall k,~Z \prec X^{(k)}.$

\end{proof}

\section{LOCC transformation protocols from $GHZ_3$ to any three-qubit pure state}

This section is the details of the protocol transforming $\ket{GHZ_3}$ into any three-qubit pure state.
The proof is divided into 2 parts: W and GHZ. GHZ part have two cases: orthogonal GHZ and non-orthogonal GHZ case.

The following lemma has been shown in \cite{XD2007}. This  also is our step $i)$ in all the cases.
\begin{lemma} \label{schmidt}
$\forall z_0,z_1,z_2, \sum \limits_{k=0}^2 |z_k|^2=1$,
$$\ket{GHZ_3} \locc  z_0\ket{000}+ z_1\ket{111}+ z_2\ket{222}.$$
\end{lemma}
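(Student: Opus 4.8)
The plan is to give an explicit single-round LOCC protocol rather than to appeal to a general majorization theorem. In the protocol Alice performs a local measurement, broadcasts her classical outcome to Bob and Charlie, and each of them applies a correcting local unitary; the resulting transformation is \emph{deterministic}. I would not strip off the phases of the $z_k$ at the start, since only the moduli $|z_k|$ enter the measurement-completeness condition, so the complex coefficients can be carried straight through the protocol.

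Concretely, let $\Pi$ denote the qutrit cyclic shift $\Pi\ket{k}=\ket{(k+1)\bmod 3}$ and set $D=\diag(z_0,z_1,z_2)$. I would have Alice measure with the three Kraus operators $M_j=D\,\Pi^{j}$ for $j\in\{0,1,2\}$. The first thing to check is that this is a valid measurement: $\sum_{j}M_j^\dagger M_j=\sum_j \Pi^{-j}\,\diag(|z_0|^2,|z_1|^2,|z_2|^2)\,\Pi^{j}$ is a sum of cyclic shifts of one diagonal matrix, whose $k$-th diagonal entry totals $\sum_k|z_k|^2=1$, so the sum equals the identity. Applying $M_j\ox I\ox I$ to $\ket{GHZ_3}$ gives $\frac{1}{\sqrt 3}\sum_k z_{k+j}\ket{k+j}_A\ket{k}_B\ket{k}_C$, and relabelling $m=k+j$ turns this into $\frac{1}{\sqrt 3}\sum_{m}z_m\ket{m}_A\ket{m-j}_B\ket{m-j}_C$; its squared norm is $\tfrac13\sum_m|z_m|^2=\tfrac13$, so every outcome $j$ occurs with equal probability $1/3$.

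The key step, and the place where the tripartite nature really matters, is the correction. Viewing $\ket{GHZ_3}$ as maximally entangled across the cut $A\,|\,BC$, the bipartite theory alone would only guarantee that the ``$BC$ party'' can undo the shift by some unitary supported on $\mathrm{span}\{\ket{00},\ket{11},\ket{22}\}$, which is not obviously realizable by Bob and Charlie separately. The point I would stress is that this correction \emph{factorizes}: the required map $\ket{m-j}_B\ket{m-j}_C\mapsto\ket{m}_B\ket{m}_C$ is exactly $\Pi^{j}\ox\Pi^{j}$ on the two qutrits, a genuine product of local unitaries. Hence Bob applies $\Pi^{j}$ and Charlie applies $\Pi^{j}$ upon learning $j$, yielding $\sum_m z_m\ket{mmm}$ in every branch, and therefore deterministically. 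Verifying this factorization (and that the off-support extension of each $\Pi^j$ is irrelevant) is the main obstacle; everything else is routine bookkeeping.

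Finally I would note that the statement also follows abstractly: the generalized GHZ states are Schmidt-decomposable across every bipartite cut, the Schmidt vector of $\ket{GHZ_3}$ is the uniform vector $(1/3,1/3,1/3)$, and the uniform vector is majorized by every probability vector, so the generalized Nielsen criterion of \cite{XD2007} applies at once. I prefer the explicit protocol here because it is self-contained and makes the required local operations manifest.
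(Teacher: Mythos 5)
Your proposal is correct and takes essentially the same approach as the paper: the paper's three Kraus operators are exactly your $D\,\Pi^{j}$ (a diagonal matrix of the $z_k$ composed with cyclic shifts), after which Bob and Charlie apply the corresponding shift corrections upon learning Alice's outcome. You simply make explicit the completeness check, the outcome probabilities, and the factorization of the correcting unitary into $\Pi^{j}\otimes\Pi^{j}$, all of which the paper leaves implicit.
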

\begin{proof}
Alice takes the following measurement and sends Bob and Charlie the result,
$\\ \{M_1= z_0 \op{0}{0} + z_1 \op{1}{1} + z_2 \op{2}{2},
\\ ~~M_2= z_0\op{0}{1} + z_1 \op{1}{2} + z_2 \op{2}{0},
\\ ~~M_3= z_0 \op{0}{2} + z_1 \op{1}{0} + z_2 \op{2}{1} \}.$

The, Bob and Charlie make some unitary operations based on Alice's measurement outcome, which transform the state to $z_0\ket{000}+ z_1\ket{111}+ z_2\ket{222}.$
\end{proof}

\subsection{Protocol of entanglement transformation from $GHZ_3$ to $W$ type states}
If the target state $\ket{\phi}$ is stochastic local operations and classical communication (SLOCC) equivalent to W state, $\ket{\phi}$ can be written as $x_0 \ket{000}+x_1 \ket{100}+x_2 \ket{010}+x_3 \ket{001},$ where $x_k$ are all positive real numbers and $\sum \limits_{k=0}^3 x_k^2=1$ \cite{KT2010}.
$\\i)$ $\ket{GHZ_3} \locc \sqrt{x_0^2+x_1^2}\ket{000} + x_2 \ket{111}+x_3 \ket{222}$
$\\ii)$ $\sqrt{x_0^2+x_1^2}\ket{000} + x_2 \ket{111}+x_3 \ket{222} \locc \sqrt{x_0^2+x_1^2}\ket{100} + x_2 \ket{010}+x_3 \ket{001}$
$\\iii)$ $\sqrt{x_0^2+x_1^2}\ket{100} + x_2 \ket{010}+x_3 \ket{001} \locc \ket{\phi}.$

Step $i)$ is lemma \ref{schmidt} which $z_0=\sqrt{x_0^2+x_1^2},~z_1=x_2$ and $z_2=x_3$.

Step $ii):$ $\sqrt{x_0^2+x_1^2}\ket{000} + x_2 \ket{111}+x_3 \ket{222} \locc \sqrt{x_0^2+x_1^2}\ket{100} + x_2 \ket{010}+x_3 \ket{001}$

Alice takes the measurement:
$\{M_1=(\op{1}{0} + \op{0}{1}+\op{0}{2})/ \sqrt{2}, ~M_2=(\op{1}{0}+\op{0}{2}-\op{0}{1})/ \sqrt{2} \}.$

Bob takes the measurement:
$\{M_1=(\op{1}{1} + \op{0}{0}+\op{0}{2})/ \sqrt{2}, ~M_2=(\op{1}{1}+\op{0}{0}-\op{0}{2})/ \sqrt{2} \}.$

Charlie takes the measurement$\\iii)$ $\sqrt{x_0^2+x_1^2}\ket{100} + x_2 \ket{010}+x_3 \ket{001} \locc \ket{\phi}.$
$\{M_1=(\op{1}{2} + \op{0}{1}+\op{0}{0})/ \sqrt{2}, ~M_2=(\op{1}{2}+\op{0}{1}-\op{0}{0})/ \sqrt{2} \}.$

Alice transmits her result to Bob.
Bob transmits his result to Charlie.
Charlie transmits its result to Alice.

If the result is $2$, he or she should take a $Z-$operation, $Z=\op{0}{0}-\op{1}{1}.$

Now, the state is $\sqrt{x_0^2+x_1^2}\ket{100} + x_2 \ket{010}+x_3 \ket{001}.$

Step $iii):$ $\sqrt{x_0^2+x_1^2}\ket{100} + x_2 \ket{010}+x_3 \ket{001} \locc \ket{\phi}.$

Alice takes the measurement:
$\\ \{M_1=\frac{1}{\sqrt{2}}(\op{0}{0} + \frac{x_1}{\sqrt{x_0^2+x_1^2}} \op{1}{1} +  \frac{x_0}{\sqrt{x_0^2+x_1^2}} \op{0}{1}),
\\ ~~ M_2=\frac{1}{\sqrt{2}} (\op{0}{0} + \frac{x_1}{\sqrt{x_0^2+x_1^2}} \op{1}{1} -  \frac{x_0}{\sqrt{x_0^2+x_1^2}} \op{0}{1}) \}.$

Now, the state is $(\pm x_0 \ket{000} + x_1 \ket{100} + x_2 \ket{010} + x_3 \ket{001}).$ If Alice's result is $1$, we already get the target. If it is $2$, Alice transmits $2$ to Bob and Charlie. Bob and Charlie make unotary operation $Z=\op{0}{0}- \op{1}{1}$ on their own part. Finally, Alice makes an unitary operation $ -Z=-\op{0}{0} + \op{1}{1}$. Now, the state is $x_0\ket{000}+ x_1 \ket{100} + x_2 \ket{010} + x_3 \ket{001}.$

\subsection{Protocol of entanglement transformation from $GHZ_3$ to $GHZ$ type states}
Without loss of generality, a $GHZ$ type pure states can be denoted as $x\ket{000}+y\ket{\phi_A \phi_B \phi_C}$, where $a_0= \ip{0}{\phi_A}, a_1= \ip{1}{\phi_A}, b_0= \ip{0}{\phi_B}, b_0= \ip{0}{\phi_B}, c_0= \ip{0}{\phi_C}, c_0= \ip{0}{\phi_C}.$ $a_0, a_1, b_0, b_1, c_0$ and $c_1$ are all real numbers.

The LOCC protocols from $GHZ_3$ to orthogonal GHZ state (suppose $c_0=0$, $\ket{\phi_C}=\ket{1}$. Thus, $|x|^2+|y|^2=1$):
$\\i)$ $\ket{GHZ_3} \locc \ket{GHZ_2}.$
$\\ii)$ $\ket{GHZ_2} \locc \ket{\phi}.$

Step $i)$ is for lemma \ref{schmidt} which $z_0=z_1=1/\sqrt{2}$ and $z_2=0.$ This is a protocol $\ket{GHZ_3} \locc \ket{GHZ_2}.$

Step $ii)$:
$\ket{GHZ_2} \locc x \ket{000}+y \ket{111} \locc x \ket{000}+y \ket{\phi_A 1 1}  \locc x\ket{000}+y\ket{\phi_A \phi_B 1} = \ket{\phi}.$

For $\ket{GHZ_2} \locc (x \ket{000}+y \ket{111})$:
Alice takes the measurement:
$\\ \{M_1=x \op{0}{0} + y\op{1}{1}, ~ M_2=x \op{1}{1} + y\op{0}{0} \}.\\$ If output is ``1'', finish. If output is ``2'', Alice take an $X-$operation, $X=\op{1}{0}+\op{0}{1}.$

The protocol for $x \ket{000}+y \ket{111} \locc x \ket{000}+y \ket{\phi_A 1 1}$ is as following.
Alice takes the measurement:
$\\ \{M_1=\frac{\sqrt{2}}{2}(\op{0}{0} + \op{\phi_A}{1}), ~ M_2=\frac{\sqrt{2}}{2}(\op{0}{0} - \op{\phi_A}{1})\}.\\$ If output is ``1'', finish. If output is ``2'', Alice transmits the result to Charlie and Charlie takes an $Z-$operation, $Z=\op{0}{0}-\op{1}{1}.$

The protocol for $x \ket{000}+y \ket{\phi_A 11} \locc x \ket{000}+y \ket{\phi_A \phi_B 1}$ is similar.
Bob takes the measurement:
$\\ \{M_1=\frac{\sqrt{2}}{2}(\op{0}{0} + \op{\phi_B}{1}), ~ M_2=\frac{\sqrt{2}}{2}(\op{0}{0} - \op{\phi_B}{1})\}.\\$ If output is ``1'', finish. If output is ``2'', Bob transmits the result to Charlie and Charlie takes an $Z-$operation, $Z=\op{0}{0}-\op{1}{1}.$

The LOCC protocols from $GHZ_3$ to non-orthogonal GHZ state:
$\\ i)$ $\ket{GHZ_3} \locc \ket{\mathit{\Psi}_1}$
$\\ ii)$ $\ket{\mathit{\Psi}_1} \locc \ket{\mathit{\Psi}_2}$
$\\ iii)$ $\ket{\mathit{\Psi}_2} \locc x \ket{00 \phi_C} + y \ket{ \phi_A \phi_B 0}$
$\\ iv)$ $x \ket{00 \phi_C} + y \ket{ \phi_A \phi_B 0} \locc x \ket{000} + y \ket{ \phi_A \phi_B  \phi_C}$

where $\ket{\mathit{\Psi}_1}=\sqrt{|x c_0 +y a_0 b_0|^2 + |y a_1 b_0|^2}~ \ket{000} + |x c_1| \ket{111} + |y b_1| \ket{222}$ and $\\ \ket{\mathit{\Psi}_2}=\sqrt{|x c_0 +y a_0 b_0|^2 + |y a_1 b_0|^2}~ \ket{000} + |x c_1| \ket{101} + |y b_1| \ket{210}.$

Step $i)$ is for lemma \ref{schmidt} which $z_0=\sqrt{|x c_0 +y a_0 b_0|^2 + |y a_1 b_0|^2},~z_1= |x c_1|$ and $z_2=|y b_1|$.

Step $ii)$ $\ket{\mathit{\Psi}_1} \locc \ket{\mathit{\Psi}_2}$:
\\ Bob takes the measurement:
$\\ \{M_1=\frac{\sqrt{2}}{2}(\op{1}{2}+\op{0}{0} + \op{0}{1}),M_2=\frac{\sqrt{2}}{2}(\op{1}{2}+\op{0}{0} - \op{0}{1}) \}.\\$
Charlie takes the measurement:
$\\ \{M_1=\frac{\sqrt{2}}{2} (\op{1}{1} + \op{0}{2}+\op{0}{0}) , M_2=\frac{\sqrt{2}}{2} (\op{1}{1} - \op{0}{2}+\op{0}{0})  \}.$

They transmit their results ($1$ or $2$) to Alice. Suppose Bob gets $\beta$ and Charlie gets $\alpha.$
Alice takes the unitary operation $M = \op{0}{0} - (-1)^{\beta} \op{1}{1} - (-1)^{\alpha} \op{2}{2} .$

Now, the state is $\ket{\mathit{\Psi}_2}.$

Step $iii)$: Alice does the measurement
$\{
\\  M_1 = (~\frac{(x c_0 + y a_0 b_0) \ket{0} + y a_1 b_0 \ket{1} }{\sqrt{|x c_0 +y a_0 b_0|^2 + |y a_1 b_0|^2}} \langle 0 |
+ \frac{x c_1}{|x c_1|} \op{0}{1}
+ \frac{y b_1}{|y b_1|} ( a_0 \ket{0} + a_1 \ket{1}) \langle 2 |~)/2,
\\  M_2 = (~\frac{(x c_0 + y a_0 b_0) \ket{0} + y a_1 b_0 \ket{1} }{\sqrt{|x c_0 +y a_0 b_0|^2 + |y a_1 b_0|^2}} \langle 0 |
+ \frac{-x c_1}{|x c_1|} \op{0}{1}
+ \frac{ y b_1 }{ |y b_1|} ( a_0 \ket{0} + a_1 \ket{1}) \langle 2 |~)/2,
\\  M_3 = (~\frac{(x c_0 + y a_0 b_0) \ket{0} + y a_1 b_0 \ket{1} }{\sqrt{|x c_0 +y a_0 b_0|^2 + |y a_1 b_0|^2}} \langle 0 | + \frac{x c_1}{|x c_1|} \op{0}{1}+ \frac{- y b_1 }{ |y b_1|} ( a_0 \ket{0} + a_1 \ket{1}) \langle 2 |~)/2,\\  M_4 = (~\frac{(x c_0 + y a_0 b_0) \ket{0} + y a_1 b_0 \ket{1} }{\sqrt{|x c_0 +y a_0 b_0|^2 + |y a_1 b_0|^2}} \langle 0 |
+ \frac{-x c_1}{|x c_1|} \op{0}{1}+ \frac{- y b_1 }{ |y b_1|} ( a_0 \ket{0} + a_1 \ket{1}) \langle 2 |~)/2\}$

The resulting states are all LOCC-equivalent to $ ((x c_0 + y a_0 b_0) \ket{0} + y a_1 b_0 \ket{1} )\ket{00} + x c_1 \ket{001} + y b_1 ( a_0 \ket{0} + a_1 \ket{1}) \ket{10} = x \ket{00 \phi_C} + y \ket{ \phi_A \phi_B 0}.$ It is LOCC-equivalent to $x \ket{000} + y \ket{ \phi_A \phi_B  \phi_C},$ which can be done in next step.

Step $iv)$: Charlie takes a local unitary operation $\op{\phi_C}{0}+(c_1\ket{0} - c_0 \ket{1}) \langle 1|.$

\subsection {The common resource of three-qubit system cannot be a $\h^3 \ox \h^2 \ox \h^2$ pure state}

We try to prove $\ket{GHZ_3}$ is an optimal common resource. We still need more time for further research.
This section shows that there is no common resource of three-qubit system in $\h^3 \ox \h^2 \ox \h^2 $ systems.

There are two SLOCC equivalent classes in $\h^3 \ox \h^2 \ox \h^2$ system \cite{CMS10}.
They are $\ket{000}+\ket{101}+\ket{210}$ and $\ket{000}+\ket{111}+\ket{201}+\ket{210}$ \cite{CMS10}.



Let
\begin{equation}
\ket{\Psi}_{ABC}=\sqrt{\lambda_0}\ket{0}\ket{\Phi_{00}}+\sqrt{\lambda_1}\ket{1}\ket{\Phi_{01}}+\sqrt{\lambda_2}\ket{2}\ket{\Phi_{10}}
\end{equation}
and
\begin{equation}
\ket{\varphi_\lambda}_{ABC}=\frac{1}{\sqrt{2}}\left(\ket{1}_A\otimes\ket{00}_{BC}+\ket{0}_A\otimes\ket{\tau_\lambda}_{BC}\right),
\end{equation}
where
\begin{equation}
\ket{\tau_\lambda}=\sqrt{1-\lambda}\ket{10}+\sqrt{\lambda}\ket{01}.
\end{equation}

\begin{theorem}
The transformation $\ket{\Psi}_{ABC}\to\ket{\varphi_\lambda}_{ABC}$ is possible by finite-round LOCC only if $\lambda=1/2$; i.e. only if $\ket{\varphi}_{ABC}$ is symmetric w.r.t. Bob and Charlie.
\end{theorem}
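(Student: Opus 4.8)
The plan is to exploit three features of the target. For $\lambda\in(0,1)$ the state $\ket{\varphi_\lambda}$ is a W-type three-qubit state (tensor rank three), it is entangled across every bipartite cut (its reduced states are $\rho_B=\mathrm{diag}(\tfrac{1+\lambda}{2},\tfrac{1-\lambda}{2})$ and $\rho_C=\mathrm{diag}(\tfrac{2-\lambda}{2},\tfrac{\lambda}{2})$), and its two pairwise tangles are $C_{AB}^2=1-\lambda$ and $C_{AC}^2=\lambda$, so $\ket{\varphi_\lambda}$ is invariant under the $B\leftrightarrow C$ swap $F_{BC}$ exactly when $\lambda=1/2$. On the source side, tensor rank four is equivalent to the $BC$ reduced support being three-dimensional with an \emph{entangled} orthogonal complement; after bringing $\ket{\psi}$ to canonical form by invertible local operations its $BC$ support is the symmetric subspace and the missing dimension is the antisymmetric state $\ket{01}-\ket{10}$, so the canonical source is itself $F_{BC}$-invariant.

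First I would show that Bob and Charlie never apply a rank-one local operator on any branch of a deterministic protocol: such an operator leaves that party's qubit in a product state with the rest, and since no subsequent LOCC can restore the lost entanglement, that branch could not reach $\ket{\varphi_\lambda}$, which is entangled across the $B|AC$ and $C|AB$ cuts. Hence on every branch Bob's and Charlie's net operators are invertible. Combining this with an induction on the number of rounds, any round begun by Bob or Charlie, or by an invertible operator of Alice, maps the rank-four source to another rank-four state and merely lowers the round count; so it suffices to treat the case in which Alice's first nontrivial operator is rank-reducing, i.e.\ each outcome $\ket{\psi_k}\propto(A_k\ox I\ox I)\ket{\psi}$ has Schmidt rank two across $A|BC$.

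Next I would pin down the branch states. The remaining Bob and Charlie operators act on the $BC$ support by invertible $B\ox C$ maps, which are bijections on product vectors and hence preserve the number of product vectors the support contains; Alice's remaining (branch-invertible) operators leave the support unchanged. Therefore each surviving branch must have a $BC$ support containing exactly as many product vectors as the target support $\operatorname{span}\{\ket{00},\ket{\tau_\lambda}\}$, namely one (the determinant of a general element of the target support vanishes only on $\ket{00}$). Equivalently the two-dimensional $BC$ support, which lies in the symmetric subspace, must be \emph{tangent} to the conic of product states; this forces Alice's kernel directions onto a fixed conic and makes each $\ket{\psi_k}$ a genuine three-qubit W-type state. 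Crucially, because the canonical source is $F_{BC}$-invariant and $A_k$ acts only on Alice, every branch inherits $F_{BC}\ket{\psi_k}=\ket{\psi_k}$, whence $\rho_B^{(k)}=\rho_C^{(k)}$ and $C_{AB}=C_{AC}$ on each branch.

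The final step is to propagate this symmetry to the target. If Bob and Charlie finish each symmetric branch by deterministic (hence unitary) single operators, then $\ket{\psi_k}$ is local-unitarily equivalent to $\ket{\varphi_\lambda}$ and $C_{AB}=C_{AC}$ immediately gives $\lambda=1/2$. The main obstacle is the genuinely multi-round case, where Bob or Charlie may still perform further \emph{invertible measurements} that break the $B\leftrightarrow C$ symmetry of a branch before $\ket{\varphi_\lambda}$ is reached. Here a single scalar monotone does not suffice: the signed asymmetry $C_{AB}^2-C_{AC}^2$ can be \emph{increased} by one party's measurement through entanglement of assistance, so the impossibility is a finite-round phenomenon, not one visible to separable or asymptotic operations. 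I would close this gap by an induction on rounds using the W-class entanglement monotones (in the spirit of \cite{KT2010}): the symmetric W-type branch states together with the completeness relation $\sum_k A_k^\dagger A_k=I$ restrict how much $B\leftrightarrow C$ asymmetry can be generated in a bounded number of rounds, and the only target consistent with all branches is the symmetric one, $\lambda=1/2$. Establishing this controlled build-up of asymmetry under finite-round LOCC is the heart of the argument.
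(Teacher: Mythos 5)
Your proposal has a genuine gap, and you name it yourself: the ``controlled build-up of asymmetry under finite-round LOCC,'' which you defer to a future induction with W-class monotones, \emph{is} the entire theorem. Everything before it (Bob's and Charlie's cumulative operators must be invertible, Alice's rank-reducing step produces W-type branches whose $BC$ support meets the product-state conic tangentially) is correct but does not touch the case where Bob or Charlie perform invertible measurements after Alice's rank-reducing step, and you concede that the signed asymmetry $C_{AB}^2-C_{AC}^2$ is not monotone under such measurements. A second, independent flaw is the symmetry premise itself: a generic tensor-rank-four $3\otimes2\otimes2$ state is only \emph{SLOCC}-equivalent, via invertible $\mathbb{I}\otimes B\otimes C$, to the $F_{BC}$-invariant canonical form $\ket{000}+\ket{111}+\ket{201}+\ket{210}$; the state itself satisfies $F_{BC}\ket{\psi}\neq\ket{\psi}$ whenever $B\neq C$, and LOCC reachability is not preserved under conjugation by invertible local operations. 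So ``every branch inherits $F_{BC}\ket{\psi_k}=\ket{\psi_k}$'' is false for a general source, and the symmetric case cannot be assumed without loss of generality.

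The paper avoids both problems by a different route that you may want to compare against. Since the target is W-class, Alice must be the last party to perform a nontrivial measurement on every branch; hence just before her final binary measurement $\{M_0,M_1\}$ the state $\ket{\psi''}$ is still $3\otimes2\otimes2$ of tensor rank four (all prior Kraus operators being invertible), and \emph{both} outcomes, followed only by local unitaries $U_i\otimes V_i$ of Bob and Charlie, must equal $\ket{\varphi_\lambda}$. Writing $\ket{\psi''}=\sum_i c_i\ket{i}\ket{\chi_i}$ in the eigenbasis of $M_0^\dagger M_0$, the two outcome conditions force $U_0\otimes V_0$ and $U_1\otimes V_1$ to map a common $\ket{\chi_i}$ to $\ket{00}$ or to $\ket{\tau_\lambda}$; when $\lambda\neq 1/2$ the Schmidt basis of $\ket{\tau_\lambda}$ is unique, so $U_1\otimes V_1$ agrees with $U_0\otimes V_0$ up to diagonal phases, and substituting back shows $\ket{\psi''}$ has tensor rank three --- a contradiction. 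No symmetry of the source is invoked anywhere; the rigidity comes from the consistency of the two measurement outcomes, which is exactly the mechanism your sketch is missing.
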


\begin{proof}
The proof relies on the theory of matrix pencils and its connection to $\h^3 \ox \h^2 \ox \h^2$ entanglement, as describe in Ref. \cite{CMS10}.  Recall that the set of $\h^3 \ox \h^2 \ox \h^2$ genuinely entangled states can be partitioned into two equivalence classes characterized by whether or not the matrix pencil of a given state has an elementary divisor.  It can be seen that the matrix pencil associated with the state $\ket{\Psi}_{ABC}$ has no elementary divisor and therefore it belongs to the SLOCC class (ABC-4).  We will argue that the LOCC transformation  $\ket{\Psi}_{ABC}\to\ket{\varphi_\lambda}_{ABC}$ requires passing through a state using \textit{invertible} local operations that belongs to either the SLOCC class (ABC-3) or (ABC-2), which is impossible.

Consider any finite-round LOCC protocol performed on $\ket{\psi}$.  Since $\ket{\varphi_\lambda}$ is a W-class state, it follows that Alice must be the last party to perform a non-trivial measurement along each branch of the protocol.  For one of the branches, let $\ket{\psi'}$ be the final state before Alice performs her last non-trivial measurement.  At this point, all Kraus operators performed along this branch are invertible and $\ket{\psi'}$ is still a $\h^3 \ox \h^2 \ox \h^2$ state.

We can always decompose Alice's final measurement into finite-depth tree generated by binary-outcome measurements.  Let $\ket{\psi''}$ denote one of the $\h^3 \ox \h^2 \ox \h^2$ states obtained immediately before Alice performs her final measurement along one branch of this binary-outcome tree.  Thus, Alice measures $\{M_0,M_1\}$ on $\ket{\psi''}$ and both outcomes must be LU equivalent to $\ket{\varphi_\lambda}$.  Let $\{\ket{0},\ket{1},\ket{2}\}$ be a basis in which $M_0^\dagger M_0$ is diagonal.  The condition $M_0^\dagger M_0+M_1^\dagger M_1= {I} $ implies that $M_1^\dagger M_1$ is also diagonal in this basis, and therefore we have the general forms

\begin{align}
M_0^\dagger M_0=\begin{pmatrix} \alpha&0&0\\0&1&0\\0&0&0\end{pmatrix},
\\
M_1^\dagger M_1=\begin{pmatrix} 1-\alpha&0&0\\0&0&0\\0&0&1\end{pmatrix}.
\end{align}
The polar decomposition of $M_i$ is given by $M_i=W_i\sqrt{M_i^\dagger M_i}$ for unitary $W_i$.  We next expand
$
\ket{\psi''}=c_0\ket{0}_A\ket{\chi_0}_{BC}+c_1\ket{1}_A\ket{\chi_1}_{BC}+c_2\ket{2}_A\ket{\chi_2}_{BC}.
$

Let $U_i$ and $V_i$ be the local unitaries of Bob and Charlie performed after outcome $i$ of Alice's measurement.  Since $M_0\otimes U_0\otimes V_0\ket{\psi''}\approx\ket{\varphi_\lambda}$, we must have that either (i) $U_0\otimes V_0\ket{\chi_0}=e^{i\theta_0}\ket{00}$ and $U_0\otimes V_0\ket{\chi_1}=e^{i\phi_0}\ket{\tau_\lambda}$, or (ii) $U_0\otimes V_0\ket{\chi_0}=e^{i\phi_0}\ket{\tau_\lambda}$ and $U_0\otimes V_0\ket{\chi_1}=e^{i\theta_0}\ket{00}$.

Consider first case (i).  Let $\ket{\chi_0}=\ket{\alpha\beta}$ so that $U_0\otimes V_0\ket{\alpha\beta}=e^{i\theta_0}\ket{00}$.  The other measurement outcome is $M_1\otimes U_1\otimes V_1\ket{\psi''}\approx\ket{\varphi_\lambda}$, and so we must also have $U_1\otimes V_1\ket{\alpha\beta}=e^{i\theta_1}\ket{00}$ and $U_1\otimes V_1\ket{\chi_2}=e^{i\phi_1}\ket{\tau_\lambda}$.  It  then follows that there exists a local unitary $U\otimes V$ such that $U_i\otimes V_i=(Z(\eta_i,\zeta_i)\otimes Z(\mu_i,\nu_i))(U\otimes V) $, where $Z(a,b)=\left(\begin{smallmatrix} e^{ia}&0\\0&e^{ib}\end{smallmatrix}\right)$.  Thus we have
$
\label{Eq:rank3-i}
\ket{\psi''}=c_0\ket{0}\ket{\alpha\beta}
+{I}_A\otimes U^\dagger \otimes V^\dagger 
\bigg[c_1e^{i\phi_0}I_A\otimes Z^\dagger(\eta_0,\zeta_0)\otimes Z^\dagger(\mu_0,\nu_0)\ket{1}\ket{\tau_\lambda}
\\+c_2e^{i\phi_1}I_A\otimes Z^\dagger(\eta_1,\zeta_1)\otimes Z^\dagger(\mu_1,\nu_1)\ket{2}\ket{\tau_\lambda}\bigg].
$
It is not difficult to see that state is SLOCC equivalent to the state
\[\ket{\psi'''}=\ket{000}+\ket{101}+\ket{201}.\]
One can compute that the matrix pencil associated with this state has, in fact, one elementary divisor.  Therefore, $\ket{\Psi}_{ABC}\to\ket{\psi'''}$ is impossible by LOCC.

Now consider case (ii).  We have the two conditions $U_0\otimes V_0\ket{\chi_0}=e^{i\phi_0}\ket{\tau_\lambda}$ and $U_1\otimes V_1\ket{\chi_0}=e^{i\phi_1}\ket{\tau_\lambda}$.  The key observation is that if $\lambda\not=1/2$, then the respective Schmidt bases of $\ket{\chi_0}$ and $\ket{\tau_\lambda}$ are unique, and so the actions of $U_i$ and $V_i$ are fixed up to phases.  That is, we can again write $U_i\otimes V_i=(Z(\eta_i,\zeta_i)\otimes Z(\mu_i,\nu_i))(U\otimes V) $, and $\ket{\psi''}$ will have a decomposition:
$
\label{Eq:rank3-ii}
\ket{\psi''}=c_0\ket{0}\ket{\chi_0}+I_A\otimes U^\dagger \otimes V^\dagger
\bigg[c_1e^{i\phi_0}I_A\otimes Z^\dagger(\eta_0,\zeta_0)\otimes Z^\dagger(\mu_0,\nu_0)\ket{1}\ket{00}
\\
+c_2e^{i\phi_1}I_A\otimes Z^\dagger(\eta_1,\zeta_1)\otimes Z^\dagger(\mu_1,\nu_1)\ket{2}\ket{00}\bigg].
$
It can be seen that this state is SLOCC equivalent to $\ket{0}\ket{\tau_\lambda}+\ket{1}\ket{00}$.  However, this is a $\h^2\otimes \h^2\otimes \h^2$ state which contradicts the assumption that all parties performed invertible operations up to this point in the protocol.
\end{proof}

This means any $\h^3 \ox \h^2 \ox \h^2$ state is not a common resource for three-qubit systems.\end{document}